\newtheorem{theorem}{Theorem}[section]
\newtheorem{lemma}[theorem]{Lemma}
\newtheorem{observation}{Observation}
\newtheorem{problem}{Problem}
\newcommand{\sq}{\hbox{\rlap{$\sqcap$}$\sqcup$}}
\newcommand{\qed}{\hspace*{\fill}\sq}
\newenvironment{proof}{\noindent {\bf Proof.}\ }{\qed\par\vskip 4mm\par}
\begin{document}
\title{Shortest Paths of Mutually Visible Robots}
\author{Rusul J. Alsaedi, Joachim Gudmundsson, André van Renssen}
\date{}
\maketitle

\begin{abstract}
Given a set of $n$ point robots inside a simple polygon $P$, the task is to move the robots from their starting positions to their target positions along their shortest paths, while the mutual visibility of these robots is preserved. Previous work only considered two robots. In this paper, we present an $O(mn)$ time algorithm, where $m$ is the complexity of the polygon, when all the starting positions lie on a line segment $S$, all the target positions lie on a line segment $T$, and $S$ and $T$ do not intersect. We also argue that there is no polynomial-time algorithm, whose running time depends only on $n$ and $m$, that uses a single strategy for the case where $S$ and $T$ intersect.
\end{abstract}

\section{Introduction}
\label{section:Introduction}

Mobile robots are widely used for surveillance, cleaning, inspection and transportation tasks, with applications in fields such as household maintenance, space exploration, delivery of goods and services, and wastewater treatment~\cite{ben2018robots,nehmzow2001mobile}.

The problem of navigating robots in a polygonal domain has been studied extensively. In this paper, we assume that robots are independent point robots and execute the same algorithm. The robots cannot communicate with each other, but they have full visibility of their environment. We study the following problem in this model:   

\begin{problem} {\sc Shortest Path Mutual Visibility (SPMV)}
Given a simple polygon $P$ consisting of $m$ vertices, a set $\mathcal{R}$ of $n$ point robots, where each robot $r_i \in \mathcal{R}$ has a starting position $s_i$ and a target position $t_i$, the problem is to move all robots from their starting positions to their target positions along their shortest paths while keeping all robots mutually visible.   
\end{problem}

In Figure~\ref{fig:two_cases} we show two examples for the restricted case where the starting points lie on a segment $S$ and the target points lie on a segment $T$. 

\begin{figure}[ht!]
 \centering
  \includegraphics[width=\textwidth]{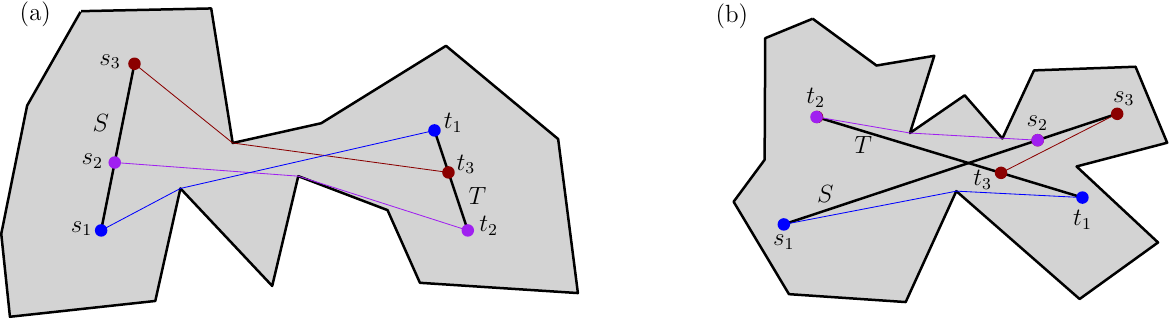}
  \caption{Two examples showing the problem instances considered in this paper. (a) The case when $S$ and $T$ do not intersect, and (b) when $S$ and $T$ intersect.}
  \label{fig:two_cases}
\end{figure}

Most of the existing work in the field focuses on the shortest path problem. Guibas {\it et al.}~\cite{guibas1987linear} presented algorithms to solve the shortest path problem between two points inside a simple polygon and showed the relationship between visibility and the shortest path problem. See also~\cite{ahn2016linear,aichholzer2014geodesic,aronov2015time,asano2013memory,bae2013geodesic,bae2019computing,bae2015computing,barba2015space} for more recent work on shortest paths in simple polygons and polygonal domains.

The mutual visibility problem for robots in a polygonal domain has been widely studied; starting from any initial configuration, the aim is to move the robots into positions where every robot can see all other robots. The problem was solved under both obstructed visibility (i.e., three collinear robots are not mutually visible to each other) and unobstructed visibility (i.e., three collinear robots are mutually visible to each other)~\cite{https://doi.org/10.48550/arxiv.2206.14423,di2017mutual,Luna2014,poudel2019sublinear,Sharma2015,sharma2018make,Vaidyanathan2015}. 

However, these papers do not consider mutual visibility while the robots are moving. The most closely related work is by Fenwick {\it et al.}~\cite{fenwick2005optimal,fenwick2007mutually}. They considered the {\sc SPMV} problem for the special case where $n=2$ (i.e., there are two moving robots the need to remain mutually visible while moving through the domain), whereas we consider the problem for $n$, i.e., an arbitrary number of robots. 

Like Fenwick {\it et al.}, we consider a discretization of the robot movements. We aim to compute a schedule, which specifies for each robot at what moment it moves where (in a straight line from its current location) and at what speed. We consider only constant movement speed in this paper. When needed, we will refer to a \emph{step}, meaning a maximal period of time during which no changes to the robots speed or direction occur. In other words, if robot $r_1$ is moving with speed 2 towards the right and robots $r_2$ and $r_3$ are currently not moving, then this step ends when either robot $r_1$ stops, changes direction, or changes speed, or when robot $r_2$ or $r_3$ start moving. 

The results presented in this paper are twofold. In Section~\ref{section:Non-Crossing Start and Target Lines} we give an $O(nm)$ time algorithm for the {\sc SPMV} problem when the start and target positions lie on two non-intersecting line segments $S$ and $T$, respectively. In Section~\ref{section:Crossing Start and Target Lines} we consider the case where $S$ and $T$ intersect. Intuitively speaking, we argue that any efficient algorithm must move the robots along a rotating line, however, we also show that there are instances where moving the robots along a rotating line does not give a running time that can be bounded using only $n$ and $m$. Hence, this indicates that any polynomial-time algorithm for the {\sc SPMV} problem cannot rely on a single strategy.

\section{Non-Crossing Start and Target Line Segments}
\label{section:Non-Crossing Start and Target Lines}
In this section, we present an algorithm that solves the \textsc{SPMV} problem when the start and target segments, $S$ and $T$, do not intersect. 

Let $\pi_i$ be the shortest path within $P$ from the start point $s_i$ on $S$ of robot $r_i$ to its target point $t_i$ on $T$, for $1 \leq i \leq n$. Let $Q$ be the minimal, possibly degenerate polygon within $P$ that includes the start line $S$, the target line $T$, and the set of $\pi_i$'s, $1 \leq i \leq n$, as shown in Figure~\ref{fig:Defining_Q}.

If all paths $\pi_i$, $1 \leq i \leq n$, share a single polygon vertex, we refer to this vertex as their intersection point, as shown in Figure~\ref{fig:intersection_point}(a). If they share multiple vertices, then we say that the first of these vertices along the paths is the intersection point, see Figure~\ref{fig:intersection_point}(b). 

\begin{figure}[ht!]
 \centering
  \includegraphics[width=\textwidth]{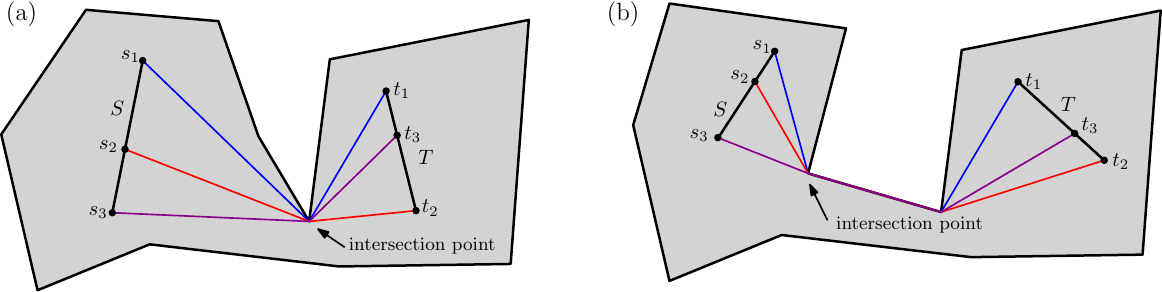}
  \caption{Illustrating the definition of an intersection point.}
  \label{fig:intersection_point}
\end{figure}

The following observation is immediate.
\begin{observation}\label{obs:Q}
If the shortest paths have no intersection point then $Q$ consists of $S$, $T$ and two concave paths connecting the end points of $S$ and $T$, see Figure~\ref{fig:Defining_Q}(a).
If the paths $\pi_i$, $1\leq i \leq n$, have an intersection point then $Q$ is a degenerate polygon consisting of two funnels; one containing $S$ and one containing $T$, and the two funnels are connected by either a joint intersection point or a single path, as shown in Figure~\ref{fig:Defining_Q}(b).
\end{observation}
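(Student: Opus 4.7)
\begin{proofs}
The plan is to exploit two classical properties of shortest paths in a simple polygon: first, every such path is a polygonal chain whose interior vertices are reflex vertices of $P$ and whose bends are ``taut-string'' kinks away from the interior of $P$; and second, two shortest paths either are disjoint or meet in a connected sub-path (they cannot cross transversely). Granted these, the family $\pi_1,\ldots,\pi_n$ either is totally ordered between $S$ and $T$ in the plane, or fans out around a maximal shared sub-path, and these are the two cases of the observation.

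I would first dispose of the case where no two paths share an interior vertex. The non-crossing property then yields a linear order on the $\pi_i$, so the two extreme paths $\pi_{\mathrm{top}}$ and $\pi_{\mathrm{bot}}$, together with $S$ and $T$, enclose a region that contains every intermediate path. This region is therefore the minimal polygon $Q$. To see that $\pi_{\mathrm{top}}$ and $\pi_{\mathrm{bot}}$ are concave chains of $\partial Q$, I would argue at each bending reflex vertex $v$ of an extreme path: the obstacle forcing the bend lies on the side of $v$ opposite the interior of $Q$, so the taut-string characterisation forces the bend to open towards the interior of $Q$, which is the definition of concavity of a boundary chain of $Q$.

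Next I would treat the case where some interior vertex is shared. Appealing again to the connectedness of the intersection of two geodesics, I would show that there is a maximal sub-path $\sigma$ lying on every $\pi_i$; denote its first vertex (along the $S$-to-$T$ orientation) by $u$ and its last by $w$, allowing $u=w$. Splitting each $\pi_i$ at $u$, the $s_i$-to-$u$ pieces together with $S$ enclose a funnel with base $S$ and apex $u$; symmetrically, the $w$-to-$t_i$ pieces together with $T$ form a funnel with base $T$ and apex $w$. The two funnels are glued either at the single point $u=w$ (a joint intersection point) or along the shared path $\sigma$, matching both sub-cases of the observation.

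The main obstacle I anticipate is the concavity claim for the extreme chains: one must rule out the possibility that a bend of $\pi_{\mathrm{top}}$ or $\pi_{\mathrm{bot}}$ opens into $Q$, since such an inward bend would contradict either the taut-string characterisation of that path or the non-crossing property against the next inner path $\pi_i$. Once this local fact is pinned down, the global description of $Q$ in each case follows directly from the definitions.
\end{proofs}
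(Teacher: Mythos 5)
The paper offers no proof of this statement at all (it is declared ``immediate''), so the question is only whether your sketch is a sound elaboration. Most of it is: the standard facts you invoke (geodesics in a simple polygon are polygonal chains bending only at reflex vertices, and the intersection of two geodesics is connected because simple polygons are uniquely geodesic) are the right tools, and the funnel decomposition at the shared subpath is exactly the structure the observation describes.

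There is, however, one genuine gap: your case split does not match the one in the statement. The paper's ``intersection point'' is a polygon vertex shared by \emph{all} $n$ paths, so the dichotomy is ``$\bigcap_i \pi_i$ contains a vertex of $P$'' versus ``it does not.'' You instead split on whether \emph{some two} paths share an interior vertex. These are different: two adjacent paths $\pi_1,\pi_2$ can both wrap around a reflex vertex near the top of $Q$ while $\pi_n$ stays well clear of it. That instance falls into your second case, but your second argument then asserts ``a maximal sub-path $\sigma$ lying on every $\pi_i$,'' which does not exist there, and the correct conclusion is still the non-degenerate polygon of case (a). The fix is easy --- run your first argument whenever not all paths share a vertex (paths touching pairwise does not degenerate $Q$), and reserve the funnel argument for the case where $\bigcap_i \pi_i$ is nonempty, where it is indeed a common subinterval of each path since each $\pi_i \cap \pi_1$ is an interval of $\pi_1$. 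A smaller caveat: your ``linear order on the $\pi_i$'' in case (a) tacitly assumes the targets appear along $T$ in the same order as the sources along $S$, which the problem does not guarantee; it is cleaner to note that $Q$ is the geodesic hull of $S \cup T$ (which already contains every $\pi_i$), so its boundary is determined by the two geodesics between the appropriate endpoints of $S$ and $T$, and the concavity of those two chains follows from your taut-string argument without ordering the individual paths.
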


\begin{figure}[ht!]
 \centering
  \includegraphics[width=\textwidth]{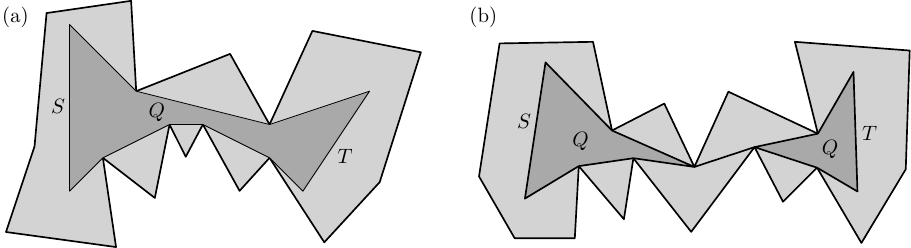}
  \caption{Illustrating Observation~\ref{obs:Q}. (a) $Q$ consists of $S$, $T$ and two concave chains connecting $S$ and $T$. (b) $Q$ is a degenerate polygon consisting of two funnels; one containing $S$ and one containing $T$, and the two funnels are connected by either a joint intersection point or a single polygonal path.}
  \label{fig:Defining_Q}
\end{figure}

To simplify the description of the algorithm, rotate $Q$ so that $S$ is entirely to the left of $T$ and neither $S$ nor $T$ are horizontal. Furthermore, assume without loss of generality that $s_1, \ldots, s_n$ are ordered from top-to-bottom along $S$. 

Let $U(Q)=\langle s_1=u_1, u_2, \ldots , u_k \rangle$ denote the top boundary of $Q$ connecting $s_1$ with the top-most endpoint of $T$ and let $V(Q) =\langle s_n=v_1, v_2, \ldots , v_{\ell}\rangle$ denote the bottom boundary of $Q$ connecting $s_n$ with the bottom-most endpoint of $T$. Note that if $Q$ is a degenerate polygon with an intersection point then at least one vertex of $Q$ will be in both $U(Q)$ and $V(Q)$.

Our algorithm starts by computing a triangulation $\cal{T}$ of $Q$ in linear time~\cite{c91}. Note that since $Q$ might contain an intersection point, a triangle in $\cal{T}$ might degenerate to a single segment.

All robots are initially positioned on $S$. Let $\tau$ be the unique triangle in $\cal{T}$ with the segment $S=(s_1,s_n)$ as one of its sides and $u_2$ and/or $v_2$ as its third vertex, according to Observation~\ref{obs:Q}. Assume without loss of generality that $u_2$ is the third vertex of $\tau$. 

With the above notation, we next describe the algorithm.  

Consider the initial configuration when all the robots are positioned on $S$. If $u_2=v_2$ then move every robot $r_i$ along $\pi_i$ to $u_2$.
Otherwise, move every robot $r_i$ along $\pi_i$ from the segment $(s_1,s_n)$ to the segment $(s_n,u_2)$. More formally, we move robot $r_i$ to the intersection of $(s_n,u_2)$ and $\pi_i$. We note that since $u_2$ lies on $U(Q)$ and $s_n$ lies on $V(Q)$, every $\pi_i$ intersects $(s_n,u_2)$. 

In a generic step of the algorithm, the robots are positioned on a (possibly degenerate) segment $(u_a,v_b)$ of $\cal{T}$. The algorithm considers four cases:   
\begin{description}
    \item[(a)] If $u_a=v_b$ and $u_{a+1}=v_{b+1}$ then every robot $r_i$ moves from $u_a$ to $u_{a+1}$.
    \item[(b)] If $u_a=v_b$ and $u_{a+1}\neq v_{b+1}$ then every robot $r_i$ moves along $\pi_i$ from $u_a$ to the segment $(u_{a+1}, v_{b+1})$.
    \item[(c)] If $u_a\neq v_b$ and $u_{a+1} = v_{b+1}$ then every robot $r_i$ moves along $\pi_i$ from $(u_a,v_b)$ to $u_{a+1}$.
    \item[(d)] Otherwise, if $u_a\neq v_b$ and $u_{a+1} \neq  v_{b+1}$ then let $\tau$ be the unique triangle in $\cal{T}$ with $(u_a, v_b)$ as one of its sides and either $u_{a+1}$ or $v_{b+1}$ as its third vertex. Assume without loss of generality that $u_{a+1}$ is the third vertex of $\tau$. Move every robot $r_i$ along $\pi_i$ from segment $(u_a,v_b)$ to segment $(u_{a+1}, v_b)$. 
\end{description} 
This iterative process continues until all robots have reached $T$.

The algorithm runs in $O(mn)$ time since the triangulation can be computed in linear time, and the shortest paths computation requires $O(mn)$ time.

It remains only to prove that the robots are mutually visible to each other during their movement from $S$ to $T$.

\begin{lemma}    
The algorithm described above guarantees that every robot $r_i$ moves from $s_i$ to $t_i$ along $\pi_i$ while being visible to every other robot.
\end{lemma}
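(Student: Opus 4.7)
My plan is to establish two invariants of the algorithm: that at every instant all robots lie inside a single (possibly degenerate) triangle $\tau$ of the triangulation $\mathcal{T}$, which immediately yields mutual visibility by convexity, and that the concatenation of the per-stage motions of each robot $r_i$ traces out exactly $\pi_i$.

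First I would argue, by induction on the stages of the algorithm, that at the start and end of every stage all robots lie on a common chord $(u_a,v_b)$ of $\mathcal{T}$. The base case is the initial configuration on $S=(u_1,v_1)$, together with the described first step onto $(s_n,u_2)$ (or onto $u_2$ when $u_2=v_2$). The inductive step follows because each of the four cases (a)--(d) transitions from one side of the unique triangle $\tau\in\mathcal{T}$ incident to $(u_a,v_b)$ on the $T$-side to an adjacent side or vertex of $\tau$. Which of $u_{a+1}, v_{b+1}$ advances at each step is forced by $\mathcal{T}$ together with Observation~\ref{obs:Q}, and both $U(Q)$ and $V(Q)$ are traversed completely before the algorithm terminates.

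Next I would prove mutual visibility. I parameterize each stage by $t\in[0,1]$ in lock-step, so that every robot $r_i$ moves from its entry point $p_i$ on $(u_a,v_b)$ to its exit point $q_i$ on the successor chord along the straight segment $p_iq_i$. Because $\tau$ is convex and $\pi_i$ is a geodesic in $P$, the sub-arc of $\pi_i$ inside $\tau$ is precisely this straight segment, so the motion is along $\pi_i$. At every time $t$, every robot lies in the convex set $\tau\subseteq Q\subseteq P$, so the segment between any two robots is contained in $P$ and they see each other. Concatenating the per-stage segments reconstructs $\pi_i$ in full: each chord $(u_a,v_b)$ separates $Q$ into an $S$-side and a $T$-side, so $\pi_i$ must cross it at least once, and a standard shortcutting argument shows exactly once.

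The main obstacle I anticipate is handling the degenerate cases where $Q$ collapses at an intersection point or where a chord $(u_a,v_b)$ degenerates to a single vertex ($u_a=v_b$). These are handled uniformly by the observation above: the motion during every stage is confined to a (possibly degenerate) triangle of $\mathcal{T}$, and whenever the triangle degenerates to a segment or a point, mutual visibility holds trivially because all robots are collinear or coincident.
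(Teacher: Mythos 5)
Your proof is correct and follows essentially the same route as the paper's: mutual visibility within each (possibly degenerate) triangle of $\mathcal{T}$ by convexity, plus the observation that the diagonals $(u_a,v_b)$ separate $S$ from $T$ so that all robots traverse the triangles in the same order. Your version is somewhat more detailed (explicit induction on stages, lock-step parameterization, and the geodesic-in-a-convex-region argument that the per-triangle motion is a straight segment of $\pi_i$), but the underlying argument is the same.
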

\begin{proof}
Within each (possibly degenerate) triangle $\tau$ of $\cal T$, the robots move along their shortest paths from one segment (or point) to another segment (or point) of $\tau$. Since a triangle is convex, all robots are mutually visible within $\tau$. Therefore, what is left to prove is that all the shortest paths visit the same set of triangles in the exact same order. 
From Observation~\ref{obs:Q}, it follows that every non-degenerate triangle in $\cal T$ has at least one vertex from $V(Q)\setminus U(Q)$ and one vertex from $U(Q)\setminus V(Q)$. That implies that (1) any path from a point on $S$ to a point on $T$ must intersect every triangle in $\cal T$, and (2) every robot $r_i$ following $\pi_i$ from $s_i$ to $t_i$ must visit the triangles in the same order. Consequently, all the robots must visit all the triangles in the same order, and as a result, every $r_i$ is visible to every other robot during the movement along their shortest paths.
\end{proof}

To summarize this section, we obtain the following theorem.
\begin{theorem} \label{thm:non-intersecting}
    In the case when the starting positions and the target positions lie on two non-intersecting segments, the {\textsc{SPMV}} problem can be solved in $O(nm)$ time.
\end{theorem}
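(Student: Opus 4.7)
The plan is to assemble the theorem from the algorithm described above together with the preceding lemma, leaving only a careful running time analysis.

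First, I would handle the preprocessing. The polygon $Q$ is the union of $S$, $T$, and the shortest paths $\pi_1,\ldots,\pi_n$, so it suffices to compute these ingredients. I would triangulate $P$ in $O(m)$ time using Chazelle's algorithm, and then compute each $\pi_i$ via the funnel method in $O(m)$ time, for a total of $O(mn)$. From these paths, $U(Q)$ and $V(Q)$ can be extracted in additional $O(mn)$ time, and the resulting $Q$ (which has complexity $O(m)$, since its boundary lies on the boundary of $P$ plus the extremal shortest paths) is triangulated in linear time to obtain $\mathcal T$.

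Second, I would analyse the iterative movement. Each generic step sweeps the robots across a single triangle of $\mathcal T$, advancing from one diagonal $(u_a,v_b)$ to an adjacent one, or from a diagonal to a vertex, or between consecutive vertices in the degenerate case. Since $\mathcal T$ contains $O(m)$ triangles, the number of iterations is $O(m)$. In each iteration, the work per robot is $O(1)$ amortised: the robot moves along a contiguous portion of $\pi_i$ that lies inside the current triangle, and charging that movement to the edges of $\pi_i$ consumed inside the triangle, the total work summed over all iterations and all robots is $O(mn)$ because $\sum_i |\pi_i| = O(mn)$. Combined with the $O(mn)$ preprocessing, the overall running time is $O(mn)$.

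Third, correctness is inherited directly from the preceding lemma: at every instant during a generic step, all robots lie inside a single (possibly degenerate) triangle of $\mathcal T$, and convexity of that triangle guarantees pairwise visibility. Moreover, since each $\pi_i$ visits the triangles of $\mathcal T$ in the same left-to-right order (by Observation~\ref{obs:Q}), the algorithm never has to ``split'' the robots across multiple triangles, and each $r_i$ indeed traces $\pi_i$ from $s_i$ to $t_i$.

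The main obstacle I anticipate is not visibility, which the lemma already supplies, but rather the amortised accounting that keeps the iterative phase within $O(mn)$: naively, each of $O(m)$ iterations could cost $O(m)$ per robot if a robot's path had many bends within one triangle. The resolution is that each edge of every $\pi_i$ is charged at most once, because once a robot crosses into a new triangle it never returns, so the total movement work is bounded by the total path complexity $\sum_i |\pi_i| = O(mn)$.
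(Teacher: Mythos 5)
Your proposal follows essentially the same route as the paper: correctness is delegated to the preceding visibility lemma, and the running time comes from the linear-time triangulation plus the $O(mn)$ shortest-path computation. The only difference is that you spell out the amortised accounting for the iterative movement phase (charging each robot's motion to the edges of $\pi_i$ it consumes), a detail the paper leaves implicit but which is correct and consistent with its analysis.
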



\section{Crossing Start and Target Line Segments}
\label{section:Crossing Start and Target Lines} 
At first sight one might think that the case when $S$ and $T$ intersect is not much harder than the non-intersecting case. However, we will argue that it is. Consider the instance shown in Figure~\ref{fig:2.1}(a). It is clear (Lemma~\ref{lemma:22}) that any algorithm that does not move the robots along a rotating line around the intersection point $q$ between $S$ and $T$ cannot terminate in polynomial time with respect to the input size (assuming $v_2$ and $v_5$ are very close to $q$). However, any algorithm that keeps the robots on a line pivoting around $q$ can also get stuck (Lemma~\ref{lemma:11}). This implies that any algorithm for the case when $S$ and $T$ intersect cannot use a single strategy algorithm, i.e., it will have to analyse the input instance and depending on the instance select an appropriate approach. 

We will need the following notations before formalising the above observations.

\begin{figure}[ht!]
 \centering
  \includegraphics[width=\textwidth]{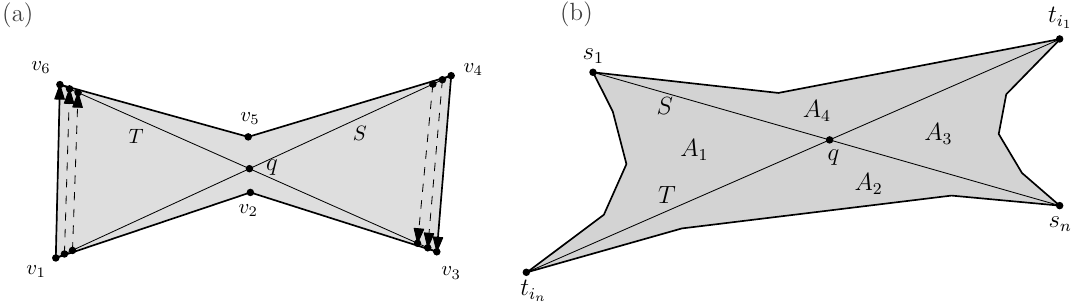}
  \caption{(a) An instance showing that the number of steps of any algorithm that does not move all the robots at the same time cannot be bounded by $n$ and $m$.  (b) An example illustrating the concave paths together with the regions $A_1, A_2, A_3$, and $A_4$.}
  \label{fig:2.1}
\end{figure}

Let $q$ be the intersection of the start and target line segments $S$ and~$T$, as illustrated in Figure~\ref{fig:2.1}(b). 
Define $Q$ as in the previous section. Furthermore, assume that $s_1, \ldots, s_n$ are ordered from top-to-bottom along $S$, and let $t_{i_1}, \ldots, t_{i_n}$ denote the target points along $T$ ordered from top-to-bottom. 

Let $A_1$ be the simple polygon bounded by the two segments $(s_1,q)$ and $(t_{i_n},q)$, and the concave chain corresponding to the shortest path in $Q$ from $s_1$ to $t_{i_n}$. Symmetrically, we define $A_2$, $A_3$, and $A_4$, see Figure~\ref{fig:2.1}(b).

Note that a robot $r_i$ moving from $s_i \in S$ to $t_i \in T$ will stay within one of the regions $A_1, A_2, A_3$, or $A_4$ during their movement. Hence, we can partition the robots into four sets $\mathcal{R}_1$, $\mathcal{R}_2$, $\mathcal{R}_3$, and $\mathcal{R}_4$ such that $\mathcal{R}_j$, $1\leq j \leq 4$, contains the robots in $\mathcal{R}$ moving through $A_j$. If a robot $r_i$ only moves along the boundary of two regions (not in their interior), then $r_i$ can be assigned to either of the two sets.

Next, we argue that any algorithm for the case when $S$ and $T$ intersect that does not move all robots at the same time requires an unbounded number of steps.

\begin{lemma}
\label{lemma:22}
There exist instances where any algorithm that does not move all robots of $\mathcal{R}_1$ and $\mathcal{R}_3$ at the same time requires a number of steps that is unbounded in $n$ and $m$.
\end{lemma}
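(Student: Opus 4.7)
The plan is to realise the instance sketched in Figure~\ref{fig:2.1}(a) as a one-parameter family controlled by a scale $\epsilon>0$. I would place the two polygon vertices flanking $q$ on the $A_1$--$A_3$ side (the $v_2,v_5$ referred to at the start of this section) within distance $\epsilon$ of $q$, so that the unique aperture through which any point of $A_1$ can see any point of $A_3$ is an $O(\epsilon)$-wide \emph{window} at $q$. All other features of the polygon, together with the start points on $S$ and target points on $T$, are of constant size; both $\mathcal{R}_1$ and $\mathcal{R}_3$ are nonempty, and every start point lies at distance $\Omega(1)$ from $q$.

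I would formalise one \emph{step} as a maximal time interval during which the set of currently moving robots is constant, so that the hypothesis of the lemma reads: in every step at least one robot $r_\ell \in \mathcal{R}_1\cup\mathcal{R}_3$ is stationary. The geometric core is the observation that if $r_\ell \in \mathcal{R}_3$ is stationary at a point $p_\ell$ with $|p_\ell - q| = \Omega(1)$, then every point of $A_1$ visible from $p_\ell$ lies inside the cone from $p_\ell$ through the $O(\epsilon)$-window, and hence inside an $O(\epsilon)$-thin tube around the line $p_\ell q$. Consequently, the displacement perpendicular to this tube of any moving $r_i \in \mathcal{R}_1$ during the step is $O(\epsilon)$. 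By choosing $s_i$ and $t_i$ (and symmetrically a distinguished robot of $\mathcal{R}_3$) so that $\pi_i$ must accumulate $\Omega(1)$ of perpendicular displacement with respect to every tube that can arise from a stationary witness on the opposite side of $q$ --- for instance, by placing its target on the ``far'' side of the line $S$ relative to $T$ --- the per-step bound of $O(\epsilon)$ sums to force $\Omega(1/\epsilon)$ steps, which is independent of both $n$ and $m$ and can be made arbitrarily large by shrinking $\epsilon$.

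The main obstacle I anticipate is that the stationary witness may change from step to step, and may itself have drifted from its original position. I would handle this either by a pigeonhole step (among the $\Omega(1/\epsilon)$ steps, some specific robot of $\mathcal{R}_3$ must be the stationary witness during a constant fraction of them, and therefore furnishes a single well-defined tube for that fraction) or by an inductive slow-drift argument: once the $O(\epsilon)$ per-step displacement bound has been established, after $k$ steps no potential witness can have moved more than $O(k\epsilon)$, so the family of feasible tubes stays within an $O(\epsilon)$-neighbourhood of one fixed tube until some robot has already moved by a constant distance, which still requires $\Omega(1/\epsilon)$ steps. The same argument runs with the roles of $\mathcal{R}_1$ and $\mathcal{R}_3$ interchanged, handling the case where the stationary witness sits in $\mathcal{R}_1$.
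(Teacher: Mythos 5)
Your construction and argument are essentially the paper's own proof: the paper also realises the instance of Figure~\ref{fig:2.1}(a) with $v_2,v_5$ at distance $\epsilon$ from $q$, observes that mutual visibility of $\mathcal{R}_1$ and $\mathcal{R}_3$ confines all of them to an $O(\epsilon)$-wide strip through the window at $q$ pinned by whichever robot is stationary, and concludes that each step advances the moving robots by $O(\epsilon)$ while $\Omega(1)$ total progress is required, giving $\Omega(1/\epsilon)$ steps. Your additional care about the definition of a step, the perpendicular-versus-longitudinal displacement distinction, and the drifting witness goes beyond what the paper writes down, but it does not change the approach.
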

\begin{proof}
Consider the instance shown in Figure~\ref{fig:2.1}(a). It consists of six vertices $v_1, \ldots , v_6$, where $v_1$, $v_3$, $v_4$ and $v_6$ are at the endpoints of $S$ and $T$ while $v_2$ and $v_5$ are arbitrarily close to $q$.

Let all robots in $\mathcal{R}_1$ start arbitrarily close to $v_1$ and move to a position arbitrarily close to $v_6$, and let all robots in $\mathcal{R}_3$ start arbitrarily close to $v_4$ and move arbitrarily close to $v_3$.

We observe that for the robots in $\mathcal{R}_1$ and $\mathcal{R}_3$ to be visible to each other, they all need to be contained in a narrow strip defined by two almost\footnote{As $v_2$ and $v_5$ approach $q$, these lines become more and more parallel.} parallel lines through $v_2$ and $v_5$.

Hence, in each step where not all robots are moved, the robots can move at most twice the width of this narrow strip. 
Since this width is determined by the distance of $v_2$ and $v_5$ to $q$, this can be arbitrarily small, leading to a number of steps not related to $n$ and $m$.
\end{proof}

Next, we prove that any algorithm that moves the robots along a rotating line around $q$ can also get stuck.

\begin{lemma} \label{lemma:11}
If $S$ and $T$ intersect, then any algorithm that keeps the robots in $\mathcal{R}_1$ and $\mathcal{R}_3$ on a line through $q$ or that keeps $\mathcal{R}_2$ and $\mathcal{R}_4$ on a line through $q$ can get stuck and be unable to proceed.
\end{lemma}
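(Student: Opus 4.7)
My plan is to prove Lemma~\ref{lemma:11} by exhibiting a concrete instance on which the rotating-line strategy provably reaches a configuration from which it cannot proceed. Fix coordinates so that $q$ is the origin, with $S$ vertical and $T$ horizontal, and build a simple polygon $P$ whose concave chain bounding region $A_1$ contains a well-placed reflex vertex $w$ that lies on the shortest path $\pi_a$ of at least one robot $r_a \in \mathcal{R}_1$. The choice of $w$ is made precisely so that the taut path $\pi_a$ is forced to bend around $w$ rather than proceed in a straight line from $s_a$ to $t_a$.

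The first key step is to observe that, under the rotating-line constraint, the orientation of the line $\ell(t)$ at any instant is determined by the position of any single robot in $\mathcal{R}_1 \cup \mathcal{R}_3$ together with $q$. In particular, at the moment $r_a$ is located at $w$, the line $\ell$ is forced to coincide with $\overline{qw}$, and every other robot in $\mathcal{R}_1 \cup \mathcal{R}_3$ is then pinned to the intersection of $\overline{qw}$ with its own shortest path. The second key step is to design the interior of $A_3$, together with the placement of at least one robot $r_b \in \mathcal{R}_3$, so that at this forced orientation no valid position exists for $r_b$: either the intersection $\overline{qw} \cap \pi_b$ lies outside the portion of $\pi_b$ that $r_b$ is permitted to occupy (given monotone progress toward $t_b$), or $\overline{qw}$ extended into $A_3$ fails to meet $\pi_b$ at all because the concave chain of $A_3$ blocks it. Either way the algorithm has no legal configuration at that instant.

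The main obstacle I anticipate is ruling out algorithmic maneuvers that might sidestep the impasse, such as rotating $\ell$ back and forth across the critical orientation or stalling $r_a$ short of $w$. The latter is handled by the observation that $\pi_a$ is the unique shortest path of $r_a$, and since $w$ is a reflex vertex that lies taut on $\pi_a$, any motion of $r_a$ from $s_a$ to $t_a$ along $\pi_a$ must pass through $w$ exactly once. The former is handled by noting that, regardless of how the line is rotated, it must eventually sweep from the direction of $S$ to the direction of $T$ in order to deliver all robots to their targets, so it must at some point attain the orientation $\overline{qw}$; at that instant the forced configuration is infeasible. Finally, the same construction, reflected across $q$, delivers the symmetric statement for the $\mathcal{R}_2 \cup \mathcal{R}_4$ strategy, completing the proof.
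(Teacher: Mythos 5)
Your proposed impasse mechanism does not actually produce an impasse, and it misses where the deadlock really comes from. You try to derive the contradiction purely from the placement of the line-constrained robots themselves: at the forced orientation $\overline{qw}$ you claim some $r_b \in \mathcal{R}_3$ either has no intersection of the line with $\pi_b$, or only an intersection outside the reachable portion of $\pi_b$. Neither alternative can be realized. The region $A_3$ is bounded by two segments meeting at $q$ and a taut concave chain, and every shortest path $\pi_b$ inside it runs from a point on one bounding segment to a point on the other, so its angular extent about $q$ is the entire interval between the directions of $S$ and $T$; every intermediate orientation of the line therefore meets $\pi_b$. Moreover, each edge of $\pi_b$ is angularly monotone about $q$ and its bend vertices form a contiguous, angularly monotone portion of the concave chain, so $\pi_b$ is radially monotone about $q$: as the line rotates from the direction of $S$ to that of $T$, the intersection point moves continuously and monotonically forward along $\pi_b$. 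Hence the robots of $\mathcal{R}_1 \cup \mathcal{R}_3$ can always be legally placed on the rotating line, and your construction, as described, never gets stuck.

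The paper's obstruction is of a different nature: it is a visibility conflict involving a robot that is \emph{not} on the constrained line. In its four-robot instance, $r_1$ and $r_3$ are coupled by the line through $q$, and the unconstrained robot $r_2$ (moving along $(a,c)$) must see both. A reflex vertex $v$ on $\pi_1$ forces ``$r_1$ may pass $v$ only after $r_2$ reaches $x$''; a reflex vertex $w$ on $\pi_3$ forces ``$r_2$ must stay at or before $y$ until $r_3$ passes $w$''; and the line coupling forces ``$r_3$ reaches $w$ only when $r_1$ passes $v$.'' Since $y$ precedes $x$ along $(a,c)$, these conditions form a deadlock cycle. Any repair of your argument has to bring the robots of $\mathcal{R}_2 \cup \mathcal{R}_4$, which your construction never mentions, into play. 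A secondary issue: reflecting your construction only yields a \emph{second} instance defeating the $\mathcal{R}_2$/$\mathcal{R}_4$ strategy, whereas the paper extends a single polygon (replacing the edges $(a,c)$ and $(b,d)$ by concave chains) so that both line strategies fail on the same instance, which is what the section needs to conclude that no single strategy suffices.
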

\begin{proof}
We first show that any algorithm that keeps the robots in $\mathcal{R}_1$ and $\mathcal{R}_3$ on a line can get stuck, before extending this to also include $\mathcal{R}_2$ and $\mathcal{R}_4$.

Let $a$ and $b$ denote the top and bottom endpoints of $S$, respectively, and let $c$ and $d$ denote the top and bottom endpoints of $T$, respectively.
We have four robots: $r_1$ starting at $a$ and moving to $d$, $r_2$ starting at $a$ and moving to $c$, $r_3$ starting at $b$ and moving to $c$, and $r_4$ starting at $b$ and moving to $d$.

\begin{figure}[ht!]
 \centering
  \includegraphics[width=\textwidth]{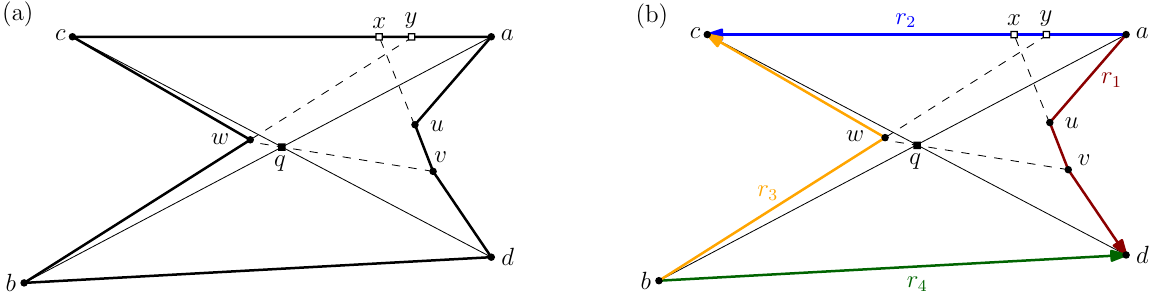}
  \caption{(a) An input instance showing that any algorithm that keeps the robots on a line through $q$ can reach a situation where it is unable to proceed. (b) Illustrating the shortest paths for the four robots.}
  \label{fig:2.2}
\end{figure}

Consider the polygon in Figure~\ref{fig:2.2}(a) where $a, b, c$, and $d$ are vertices of $P$,  $(a,c)$ and $(b,d)$ are edges of the polygon, $a$ and $d$ are connected by a concave chain $\langle a,u,v,d \rangle$, and $b$ and $c$ are connected by a concave chain $\langle b,w,c \rangle$.

In order to ensure that $r_1$ and $r_2$ remain visible, $r_1$ cannot move past $v$ before $r_2$ is on or past the projection $x$ of $\overline{uv}$ onto $(a,c)$. This also means that since $r_1$ and $r_3$ lie on a line through $q$, $r_3$ cannot reach $w$ before $r_1$ passes $v$. However, to maintain visibility between $r_2$ and $r_3$, $r_2$ needs to be on or before the projection $y$ of $\overline{bw}$ onto $(a,c)$.

We now observe that since $y$ occurs before $x$ along $(a,c)$, to maintain visibility, $r_2$ needs to be both on or before $y$ and on or after $x$, meaning that there is no position along $(a,c)$ that maintains visibility of $r_2$ with both $r_1$ and $r_3$ if we require $r_1$ and $r_3$ to be on a line through $q$.

However, in the above example, keeping $r_2$ and $r_4$ on a line through $q$ would still work, so we now extend the above construction to create a similar situation for that line while maintaining the one for $r_1$ and $r_3$. The construction is shown in Figure~\ref{fig:2.3}(a).

\begin{figure}[ht!]
 \centering
  \includegraphics[width=\textwidth]{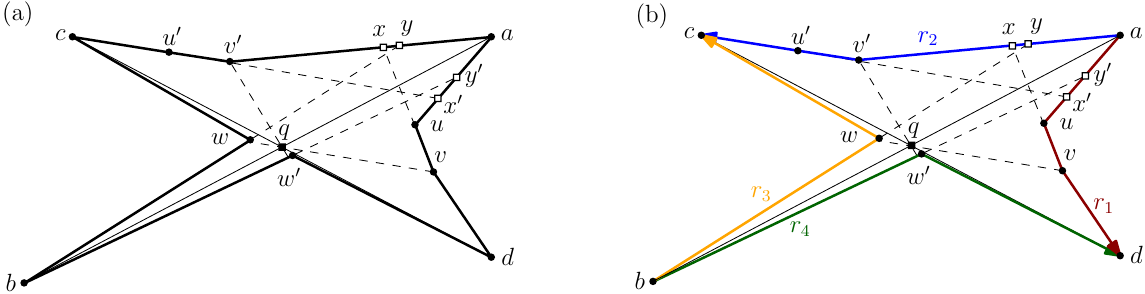}
  \caption{(a) An input instance showing that any algorithm for the crossing case that keeps the robots on a line through $q$ can reach a situation where it is unable to proceed. (b) Illustrating the shortest paths for the four robots.}
  \label{fig:2.3}
\end{figure}

Compared to the construction in Figure~\ref{fig:2.2}, we replace the edge $(a,c)$ with the concave chain $\langle a, v', u', c \rangle$, and the edge $(b,d)$ with the concave chain $\langle b, w', d \rangle$. Using an analogous argument to the previous one, $y$ still comes before $x$ along the path from $a$ to $c$, and using $u'$, $v'$ and $w'$, we can construct $x'$ and $y'$ analogous to their counterparts to ensure that $y'$ comes before $x'$ along the path from $a$ to $d$. Thus, neither the line connecting $r_1$ and $r_3$ through $q$, nor the line connecting $r_2$ and $r_4$ through $q$ can move from $v$ or $v'$, respectively.

Since we assumed nothing of the algorithm other than that it keeps either $\mathcal{R}_1$ and $\mathcal{R}_3$ or $\mathcal{R}_2$ and $\mathcal{R}_4$ on a line through $q$, we can conclude that any such algorithm will reach a situation in which it is unable to proceed.
\end{proof}


The above results indicate that any polynomial-time algorithm for the {\sc SPMV} problem cannot rely on a single strategy. Instead, it must take the input instance into account when choosing which strategy to choose. 

The {\sc SPMV} problem requires that all robots move along their shortest path and that the robots are visible to each other at all time.  We note that if we remove either condition then polynomial-time algorithms are trivial.

\section{Concluding Remarks}
\label{Concluding Remarks}
In this paper, we considered the {\sc SPMV} problem for $n$ point robots. We gave an efficient $O(nm)$ time algorithm for the case where the start and target line segments do not intersect. We also argued that any polynomial-time algorithm for the case where these line segments do intersect has to take the input polygon into consideration to determine the strategy to move the robots. In particular, we showed that no single strategy will give a polynomial-time algorithm for all simple polygons. 

There are still many open problems related to the {\sc SPMV} problem. Most importantly, is there a polynomial-time algorithm for the general case, or even for the case when $S$ and $T$ intersect? If there is not, can one prove any lower bounds? 

Additionally, we could relax the requirements on the paths the robots follow or their visibility. For example, if we allow approximate shortest paths, what bounds can be derived? Or if we require a robot to only be visible to a fraction of the other robots or even allow robots to be non-visible for a certain duration or travel distance? 

We could also allow more flexibility in the types of movement we allow for the robots. For example, we could allow the speed to vary during a step by, for example, defining it as a low-degree polynomial. We could also allow the robots to move along trajectories defined by low-degree polynomials or trigonometric functions. These types of relaxations give rise to many different related problems that could be considered as future work. 

\bibliographystyle{plain}
\bibliography{Bibliography}
\end{document}